\documentclass[journal]{IEEEtran}
\IEEEoverridecommandlockouts
\usepackage{amsthm}
\usepackage{amsmath}
\usepackage{mathtools}
\usepackage{amssymb}
\usepackage{algorithmic}
\usepackage{algorithm}
\usepackage{verbatim}
\usepackage{cite}
\usepackage{url}
\usepackage{cases}
\usepackage{subfigure}
%\usepackage{setspace}
%\doublespacing

\newtheorem{proposition}{Proposition}
\newtheorem{theorem}{Theorem}

\hyphenation{op-tical net-works semi-conduc-tor}

%%%%%%%%%%%%%%%%%%%%%%%%%%%%%%%%%%%%%%%%
%%%%%%%%%%%%%%%%%%%%%%%%%%%%%%%%%%%%%%%%
\begin{document}
\title{Sub-Nyquist Sampling for Power Spectrum Sensing in Cognitive Radios: A Unified Approach} 
\author{Deborah Cohen, \emph{Student IEEE}
        and Yonina C. Eldar, \emph{Fellow IEEE}}
\maketitle

\begin{abstract}
In light of the ever-increasing demand for new spectral bands and the underutilization of those already allocated, the concept of Cognitive Radio (CR) has emerged. Opportunistic users could exploit temporarily vacant bands after detecting the absence of activity of their owners. One of the crucial tasks in the CR cycle is therefore spectrum sensing and detection which has to be precise and efficient. Yet, CRs typically deal with wideband signals whose Nyquist rates are very high. In this paper, we propose to reconstruct the power spectrum of such signals from sub-Nyquist samples, rather than the signal itself as done in previous work, in order to perform detection. We consider both sparse and non sparse signals as well as blind and non blind detection in the sparse case. For each one of those scenarii, we derive the minimal sampling rate allowing perfect reconstruction of the signal's power spectrum in a noise-free environment and provide power spectrum recovery techniques that achieve those rates. The analysis is performed for two different signal models considered in the literature, which we refer to as the analog and digital models, and shows that both lead to similar results. Simulations demonstrate power spectrum recovery at the minimal rate in noise-free settings and show the impact of several parameters on the detector performance, including signal-to-noise ratio (SNR), sensing time and sampling rate.
\end{abstract}

\IEEEpeerreviewmaketitle

\section{Introduction}
Spectral resources are traditionally allocated to licensed or primary users (PUs) by governmental organizations. Today, most of the spectrum is already owned and new users can hardly find free frequency bands. In light of the ever-increasing demand from new wireless communication users, this issue has become critical over the past few years. On the other hand, various studies \cite{Study1, Study2, study3} have shown that this over-crowded spectrum is usually significantly underutilized and can be described as the union of a small number of narrowband transmissions spread across a wide spectrum range. This is the motivation behind cognitive radio (CR), which would allow secondary users to opportunistically use the licensed spectrum when the corresponding PU is not active \cite{Mitola, Haykin}. Even though the concept of CR is said to have been introduced by Mitola \cite{Mitola, MitolaMag}, the idea of learning machines for spectrum sensing can be traced back to Shannon \cite{Shannon}.

One of the crucial tasks in the CR cycle is spectrum sensing \cite{cog}. The CR has to constantly monitor the spectrum and detect the PU's activity in order to select unoccupied bands, before and throughout its transmission. At the receiver, the CR samples the signal and performs detection to assert which band is unoccupied and can be exploited for opportunistic transmissions. In order to minimize the interference that could be caused to PUs, the spectrum sensing task performed by a CR should be reliable and fast \cite{cognitive1, cognitive2, WidebandMishali}. On the other hand, in order to increase the chance to find an unoccupied spectral band, the CR has to sense a wide band of spectrum. Nyquist rates of wideband signals are high and can even exceed today's best analog-to-digital converters (ADCs) front-end bandwidths. Besides, such high sampling rates generate a large number of samples to process, affecting speed and power consumption. 

To overcome the rate bottleneck, several new sampling methods have recently been proposed \cite{Mishali_theory, Mishali_multicoset, MagazineMishali} that reduce the sampling rate in multiband settings below the Nyquist rate. In \cite{Mishali_theory, Mishali_multicoset, MagazineMishali}, the authors derive the minimal sampling rate allowing for perfect signal reconstruction in noise-free settings and provide sampling and recovery techniques. However, when the final goal is spectrum sensing and detection, reconstructing the original signal is unnecessary. Following the ideas in \cite{Leus, Leus2, Davies, Davies2}, we propose, in this paper, to only reconstruct the signal's power spectrum from sub-Nyquist samples, in order to perform signal detection.

Several papers have considered power spectrum reconstruction from sub-Nyquist samples, by treating two different signal models.
The first, and most popular so it seems, is a digital model which is based upon a linear relation between the sub-Nyquist and Nyquist samples obtained for a given sensing time frame. Ariananda et al. \cite{Leus, Leus2} have deeply investigated this model with multicoset sampling \cite{Mishali_multicoset, Bresler}. They consider both time and frequency domain approaches and discuss the reconstruction of the autocorrelation or power spectrum respectively, from undertermined and overdermined systems. For the first case, they expoit sparsity properties of the signal and apply compressed sensing (CS) reconstruction techniques but do not analyze the sampling rate. The authors rather focus the analysis on the second case, namely the overdetermined system, and show that it can be solved without any sparsity assumption. They demonstrate that the so-called minimal sparse ruler patterns \cite{ruler} provide a sub-optimal solution for sub-Nyquist sampling, when using multicoset sampling.

The second is an analog model that treats the class of wide-sense stationary multiband signals, whose frequency support lies within several continuous intervals (bands). Here, a linear relation between the Fourier transform of the sub-Nyquist samples and frequency slices of the original signal's spectrum is exploited.
In \cite{Davies, Davies2}, the authors propose a method to estimate finite resolution approximations to the true power spectrum exploiting multicoset sampling. That is, they estimate the average power within subbands rather than power spectrum for each frequency. They consider overdetermined and undertermined, or compressive systems. In the latter case, CS techniques are used, which exploit the signal's sparsity, whereas the former setting does not assume any sparsity. In \cite{Davies}, the authors assume that the sampling pattern is such that the system they obtain has a unique solution but no specific sampling pattern or rate satisfying this condition is discussed. In \cite{Davies2}, sampling patterns generated uniformly at random and the Golomb ruler are considered in simulations but no analysis of the required rate is performed. Another recent paper \cite{wang} considers the analog model with multicoset samplingin the non sparse setting. The authors derive necessary and sufficient conditions for perfect power spectrum reconstruction in noise free settings. They show that any universal sampling pattern guarantees perfect recovery under that sufficient conditions. They further investigate two other sub-optimal patterns that lead to perfect reconstruction under lower sampling rates.

In this paper, we aim at filling several gaps in the current literature. First, to the best of our knowledge, no comparison has been made between the two models and their respective results. Second, the general conditions required from the sampling matrix and the resulting minimal sampling rate for perfect power spectrum reconstruction in a noiseless environment have not been analyzed. In \cite{Leus, Leus2}, only multicoset sampling is considered and no universal minimal rate is provided. Rather, several compression ratios given by the sub-optimal solution of the minimal sparse ruler are shown to suffice. In \cite{Davies, Davies2}, no proof of the uniqueness of the solution is given. The authors in \cite{wang} provide necessary and sufficient conditions for perfect recovery, but only for the analog model in the non sparse setting. In this paper, we aim at providing a unifying framework for power spectrum reconstruction from sub-Nyquist samples by bridging between the two models.

We thus consider the two different signal models: the analog or multiband model and the digital one that we relate to the multi-tone model in order to anchor it to the original analog signal. For the analog model, we focus on sampling schemes that operate on the bins of the signal's spectrum and provide samples that are linear transformations of these. Two examples of such schemes are the sampling methods proposed in \cite{Mishali_theory, Mishali_multicoset, MagazineMishali}, namely multicoset sampling and the Modulated Wideband Converter (MWC). For the digital model, we analyse a generic sampling scheme and provide two different reconstruction approaches. The first, considered for example in \cite{Leus, Leus2}, is performed in the time domain whereas the second is realized in the frequency domain. While the analysis of the conditions for perfect reconstruction turns out to be difficult in the time domain, we show that it is convenient in the frequency one. There, both the analog and the digital model lead to similar relations and can therefore be investigated jointly. It is interesting to notice that other applications based on sub-Nyquist sampling, such as radar \cite{radar}, use frequency domain analysis as well. 
We examine three different scenarii: (1) the signal is not assumed to be sparse, (2) the signal is assumed to be sparse and the carrier frequencies of the narrowband transmissions are known, (3) the signal is sparse but we do not assume carrier knowledge. 

The main contributions of this paper are twofold. First, for each one of the scenarii, we derive the minimal sampling rate for perfect power spectrum reconstruction with respect to our settings in a noise-free environment. We show that the rate required for power spectrum reconstruction is half the rate that allows for perfect signal reconstruction, for each one of the scenarii, namely the Nyquist rate, the Landau rate \cite{LandauCS} and twice the Landau rate \cite{Mishali_multicoset}. Second, we present reconstruction techniques that achieve those rates for both signal models. Throughout the paper, minimal sampling rate refers to the lowest rate enabling perfect reconstruction of the power spectrum in a noiseless environment for a general sampling scheme. We do not consider the minimal rate achievable for a specific design of the sampling system. For instance, in \cite{Leus, Leus2}, the authors show that designing the multicoset sampling matrix according to the minimal sparse ruler pattern results in a minimal rate below ours. Some other specific sampling patterns are considered in \cite{wang}. In contrast, we focus on generic systems without any particular structure.

This paper is organized as follows. In Section \ref{ModelProb}, we present the stationary multiband and multi-tone models and formulate the problem. Section \ref{SecOpt} describes the sub-Nyquist sampling stage and ties the original signal's power spectrum to correlation between the samples. In Section \ref{sec:rate}, we derive the minimal sampling rate for each one of the three scenarii described above and present recovery techniques that achieve those rates. Numerical experiments are presented in Section \ref{sec:simulations}. We demonstrate power spectrum reconstruction from sub-Nyquist samples, show the impact of several practical parameters on the detection performance, and compare our detection results to Nyquist rate sampling and to spectrum based detection from sub-Nyquist samples \cite{Mishali_theory}.

\section{System Models and Goal}
\label{ModelProb}

\subsection{Analog Model}
\label{sec:model1}

Let $x(t)$ be a real-valued continuous-time signal, supported on $\mathcal{F} = [-T_{\text{Nyq}}/2, +T_{\text{Nyq}}/2]$ and composed of up to $N_{\text{sig}}$ uncorrelated stationary transmissions, such that 
\begin{equation}
x(t)=\sum_{i=1}^{N_{\text{sig}}} \rho_i s_i(t).
\end{equation}
Here $\rho_i \in \{0,1\}$ and $s_i(t)$ is a zero-mean wide-sense stationary signal. The value of $\rho_i$ determines whether or not the $i$th transmission is active. The bandwidth of each transmission is assumed to not exceed $2B$ (where we consider both positive and negative frequency bands). Formally, the Fourier transform of $x(t)$ defined by
\begin{equation}
X(f)=\int_{-\infty}^{\infty}x(t)e^{-j2\pi f t} \mathrm{d} t
\end{equation}
is zero for every $f \notin \mathcal{F}$. We denote by $f_{\text{Nyq}} = 1/T_{\text{Nyq}}$ the Nyquist rate of $x(t)$ and by $S_x$ the support of $X(f)$.

The power spectrum of $x(t)$ is the Fourier transform of its autocorrelation, namely
\begin{equation}
\label{eq:spec}
P_x(f)=\int_{-\infty}^{\infty}r_x(\tau)e^{-j2\pi f \tau} \mathrm{d} \tau,
\end{equation}
where $r_x(\tau) = \mathbb{E} \left[ x(t)x(t-\tau) \right]$ is the autocorrelation function of $x(t)$. 
From \cite{Papoulis}, it holds that
\begin{equation}
P_x(f)=\mathbb{E} \left| X(f) \right|^2.
\end{equation}
Thus, obviously, the support of $P_x(f)$  is identical to that of $X(f)$, namely $S_x$. Our goal is to reconstruct $P_x(f)$ from sub-Nyquist samples. In Section \ref{SecOpt}, we describe our sampling schemes and show how one can relate $P_x(f)$ to correlation of the samples.

We consider three different scenarii.

\subsubsection{No sparsity assumption}
In the first scenario, we assume no \emph{a priori} knowledge on the signal and we do not suppose that $x(t)$ is sparse, namely $N_{\text{sig}}B$ can be on the order of $f_{\text{Nyq}}$.

\subsubsection{Sparsity assumption and non blind detection}
Here, we assume that $x(t)$ is sparse, namely $N_{\text{sig}}B \ll f_{\text{Nyq}}$. We denote $K_f=2N_{\text{sig}}$. Moreover, the support of the potentially active transmissions is known and corresponds to the frequency support of licensed users defined by the communication standard. However, since the PUs' activity can vary over time, we wish to develop a detection algorithm that is independent of a specific known signal support.

\subsubsection{Sparsity assumption and blind detection}
In the last scenario as in the previous one, we assume that $x(t)$ is sparse, but we do not assume any \emph{a priori} knowledge on the carrier frequencies. Only the maximal number of transmissions $N_{\text{sig}}$ and the maximal bandwidth $2B$ are assumed to be known.

\subsection{Digital Model}
\label{sec:model2}

The second model we consider is the multi-tone model. Let $x(t)$ be a continuous-time signal defined over the interval $[0,T)$ and composed of up to $N_{\text{sig}}$ transmissions, such that
\begin{equation}
x(t)=\sum_{i=1}^{N_{\text{sig}}} \rho_i s_i(t), \qquad t \in [0,T).
\end{equation}
Again, $\rho_i \in \{0,1\}$ and $s_i(t)$ is a wide-sense stationary signal.
Since $x(t)$ is defined over $[0,T)$, it has a Fourier series representation
\begin{equation}
x(t) = \sum_{k = -Q/2}^{Q/2} c[k] e^{j \frac{2 \pi k}{T}t}, \qquad t \in [0,T),
\label{xmodel}
\end{equation}
where $Q/(2T)$ is the maximal possible frequency in $x(t)$.
Each transmission $s_i(t)$ has a finite number of Fourier coefficients, up to $2K_{max} \le Q+1$, so that
\begin{equation}
s_i(t) = \sum_{k \in \Omega_i} c[k] e^{j \frac{2 \pi k}{T}t}, \qquad t \in [0,T),
\label{smodel}
\end{equation}
where $\Omega_i$ is a set of integers with $\left| \Omega_i \right| \le 2K_{\text{max}}$ and $\max_{k \in \{\Omega_i\}} |k|\le Q/2$. Thus, here the support $S_x$ of $x(t)$ is $S_x=\bigcup_{i=1}^{N_{\text{sig}}} \Omega_i$.

For mathematical convenience, for this model we will consider the Nyquist samples of $x(t)$, namely
\begin{equation}
\label{xsamp}
x[n]=x(n T_{\text{Nyq}}), \qquad 0 \le n < T/T_{\text{Nyq}},
\end{equation}
where $T_{\text{Nyq}}=T/(Q+1)$. Since $x(t)$ is wide-sense stationary, it follows that $\mathbf{x}$ is wide-sense stationary as well. Let us define $N=T/T_{\text{Nyq}}= Q+1$.
From (\ref{xmodel}), the autocorrelation of $\mathbf{x}$, namely $r_\mathbf{x}[\nu] = \mathbb{E} \left[ x[n]x[n-\nu] \right]$, has a Fourier representation
\begin{equation}
r_\mathbf{x}[\nu] = \sum_{k = -Q/2}^{Q/2} s_{\mathbf{x}}[k] e^{j \frac{2 \pi k}{N} \nu}, \qquad 0 \le \nu \le N-1,
\label{eq:spec_autoco}
\end{equation}
where
\begin{equation}
s_{\mathbf{x}}[k]=\mathbb{E} \left[ c^2[k] \right] , \qquad -\frac{Q}{2} \le k \le \frac{Q}{2}.
\label{eq:spec2}
\end{equation}
From the stationarity property of the signal, namely $r_\mathbf{x}[\nu]$ is a function of $\nu$ only, it holds that
\begin{equation}
\mathbb{E} \left[ c[k] c^*[l] \right] =0, \qquad -\frac{Q}{2} \le k \neq l \le \frac{Q}{2}.
\label{eq:spec3}
\end{equation}
From (\ref{eq:spec2}), it is obvious that the Fourier coefficients of $r_x[\nu]$ lie in the same support as those of $x(t)$, namely $S_x$.

Again, we consider three different scenarii.

\subsubsection{No sparsity assumption}
In the first scenario, we assume no \emph{a priori} knowledge on the signal and we do not suppose that $x(t)$ is sparse, namely $N_{\text{sig}}K_{\text{max}}$ can be on the order of $Q+1$.

\subsubsection{Sparsity assumption and non blind detection}
Here, we assume that $x(t)$ is sparse, namely $N_{\text{sig}}K_{\text{max}} \ll Q+1$ and that the Fourier frequencies in the Fourier series expansions of $s_i(t)$, namely $\Omega_i, 1 \leq i \leq N_{\text{sig}}$ are known. We denote $K_f=2N_{\text{sig}}K_{\text{max}}$.

\subsubsection{Sparsity assumption and blind detection}
In the last scenario, we assume that $x(t)$ is sparse but we do not assume any \emph{a priori} knowledge on the Fourier frequencies in the Fourier series expansions of $s_i(t)$.

\subsection{Problem Formulation}

In each one of the scenarii defined in the previous section, our goal is to assess which of the $N_{\text{sig}}$ transmissions are active from sub-Nyquist samples of $x(t)$. For each signal, we define the hypothesis $\mathcal{H}_{i,0}$ and $\mathcal{H}_{i,1}$, namely the $i${th} transmission is absent and active, respectively. 

In order to determine which of the $N_{\text{sig}}$ transmissions are active, we first reconstruct the power spectrum of $x(t)$ for the first model (\ref{eq:spec}), or the Fourier coefficients of the signal's sampled autocorrelation for the second one (\ref{eq:spec2}). In the first and third scenarii, we fully reconstruct the power spectrum. In the second one, we exploit our prior knowledge and reconstruct it only at the potentially occupied locations. We can then perform detection on the fully or partially reconstructed power spectrum. Note that, to do so, we do not sample $x(t)$ at its Nyquist rate, nor compute its Nyquist rate samples. For each one of the scenarii, we derive the minimal rate enabling perfect reconstruction of (\ref{eq:spec}) and (\ref{eq:spec2}) respectively, in a noise-free environment, and present recovery techniques that achieve those rates.

By performing e.g. energy detection on the reconstructed power spectrum, we can detect unoccupied spectral bands, namely spectrum holes, from sub-Nyquist samples. This makes the detection process faster, more efficient and less power consuming, which fits the requirements of CRs. Other forms of detection are also possible, once the power spectrum is recovered.
%%%%%%%%%%%%%%%%%%%%%%%%%%%%%%%%%%%%%%%%%%%%%

\section{Spectrum Reconstruction from sub-Nyquist Samples}
\label{SecOpt}

\subsection{Analog Model: Sampling and the Analog Spectrum}

We begin with the analog model. For this model, we consider two different sampling schemes: multicoset sampling \cite{Mishali_multicoset} and the MWC \cite{Mishali_theory} which were previously proposed for sparse multiband signals. We show that both schemes lead to identical expressions of the signal's power spectrum in terms of that of the samples. In this section, we consider reconstruction of the whole power spectrum. In Section \ref{rate2}, we show how we can reconstruct the power spectrum only at potentially occupied locations when we have \emph{a priori} knowledge on the carrier frequencies.

\subsubsection{Multicoset sampling}
\label{sec:multico}
Multicoset sampling \cite{Bresler} can be described as the selection of certain samples from the uniform grid. More precisely, the uniform grid is divided into blocks of $N$ consecutive samples, from which only $M$ are kept. The $i$th sampling sequence is defined as
\begin{equation}
x_{c_i}[n]= \left\{ \begin{array}{ll}
x(nT_{\text{Nyq}}), & n=mN+c_i, m \in \mathbb{Z} \\
0, & \text{otherwise},
\end{array} \right.
\end{equation}
where $0 < c_1 < c_2 < \dots < c_M < N-1$. Let $f_s = \frac{1}{NT_{\text{Nyq}}} \ge B$ be the sampling rate of each channel and $\mathcal{F}_s=[-f_s/2, f_s/2]$.
Following the derivations from multicoset sampling \cite{Mishali_multicoset}, we obtain
\begin{equation}
\mathbf{z}(f) = \mathbf{A} \mathbf{x}(f), \qquad f \in \mathcal{F}_s,
\label{eq:multico}
\end{equation}
where $\mathbf{z}_i(f) = X_{c_i}(e^{j2\pi f T_{\text{Nyq}}}), 0 \le i \le M-1$ are the discrete-time Fourier transforms (DTFTs) of the multicoset samples and
\begin{equation}
\mathbf{x}_k(f)=X\left(f+K_kf_s \right), \quad 1 \le k \le N,
\label{xdef}
\end{equation}
where $K_k = k-\frac{N+1}{2}, 1 \le k \le N$ for odd $N$ and $K_k = k-\frac{N+2}{2}, 1 \le k \le N$ for even $N$. Each entry of $\mathbf{x}(f)$ is referred to as a bin since it consists of a slice of the spectrum of $x(t)$.
The $ik$th element of the $M \times N$ matrix $\mathbf{A}$ is given by
\begin{equation}
\mathbf{A}_{ik} = \frac{1}{NT_{\text{Nyq}}} e^{j\frac{2 \pi}{N} c_i K_k}.
\end{equation}

\subsubsection{MWC sampling}
The MWC \cite{Mishali_theory} is composed of $M$ parallel channels. In each channel, an analog mixing front-end, where $x(t)$ is multiplied by a mixing function $p_i(t)$, aliases the spectrum, such that each band appears in baseband. The mixing functions $p_i(t)$ are required to be periodic. We denote by $T_p$ their period and we require $f_p=1/T_p \ge B$.
The function $p_i(t)$  has a Fourier expansion
\begin{equation}
p_i(t) =\sum_{l=-\infty}^{\infty} c_{il} e^{j\frac{2\pi}{T_p} lt}.
\end{equation} 
In each channel, the signal goes through a lowpass filter with cut-off frequency $f_s/2$ and is sampled at rate $f_s \ge f_p $. For the sake of simplicity, we choose $f_s=f_p$. The overall sampling rate is $Mf_s$ where $M \le N=f_{\text{Nyq}}/f_s$.
Repeating the calculations in \cite{Mishali_theory}, we derive the relation between the known DTFTs of the samples $z_i[n]$ and the unknown $X(f)$
\begin{equation}
\mathbf{z}(f)=\mathbf{A}\mathbf{x}(f), \qquad f \in \mathcal{F}_s,
\label{eq:mwc}
\end{equation}
where $\mathbf{z}(f)$ is a vector of length $M$ with $i$th element $\mathbf{z}_i(f)=Z_i(e^{j2\pi fT_s})$. The unknown vector $\mathbf{x}(f)$ is given by (\ref{xdef}). The $M \times N$ matrix $\mathbf{A}$ contains the coefficients $c_{il}$:
\begin{equation}
\mathbf{A}_{il} = c_{i,-l}=c^*_{il}.
\end{equation}
For both sampling schemes, the overall sampling rate is
\begin{equation}
f_{tot}=Mf_s=\frac{M}{N}f_{\text{Nyq}}.
\end{equation}

\subsubsection{Analog Power Spectrum Reconstruction}
\label{analog_rec}
We note that systems (\ref{eq:multico}) and (\ref{eq:mwc}) are identical for both sampling schemes. The only difference is the sampling matrix $\mathbf{A}$. We assume that $\bf{A}$ is full spark in both cases \cite{Mishali_multicoset, Mishali_theory}, namely, that every $M$ columns of $\bf A$ are linearly independent. We thus can derive a method for reconstruction of the analog power spectrum for both sampling schemes together. We will reconstruct $P_x(f)$ from the correlation between $\mathbf{z}(f)$, defined in (\ref{eq:multico}) and (\ref{eq:mwc}).

Since $x(t)$ is a wide-sense stationary process, we have \cite{Papoulis}
\begin{equation}
\label{eq:papou}
\mathbb{E} [X(f_1) X^*(f_2) ] = P_x(f_1) \delta (f_1-f_2)
\end{equation}
where $P_x(f)$ is given by (\ref{eq:spec}). We define the autocorrelation matrix $\mathbf {R_x}(f) = \mathbb{E} [\mathbf{x}(f) \mathbf{x}^H(f) ]$, where  $(.)^H$ denotes the Hermitian operation. From (\ref{eq:papou}),  $\mathbf{R_x}(f)$ is a diagonal matrix with $\mathbf{R}_{\mathbf{x}_{(i,i)}}(f)=P_x(f+ K_i f_s)$ \cite{Davies}, where $K_i$ is defined in Section \ref{sec:multico}. Clearly, our goal can be stated as recovery of $\mathbf{R_x}(f)$, since once $\mathbf{R_x}(f)$ is known, $P_x(f)$ follows for all $f$.

We now relate $\mathbf{R_x}(f)$ to the correlation of the sub-Nyquist samples.
From (\ref{eq:multico}) or (\ref{eq:mwc}), we have
\begin{equation}
\mathbf{R_z}(f) = \mathbf{A} \mathbf{R_x}(f) \mathbf{A}^H, \qquad f \in \mathcal{F}_s,
\label{eq:autoco2}
\end{equation}
where $\mathbf {R_z}(f) = \mathbb{E} [\mathbf{z}(f) \mathbf{z}^H(f) ]$. It follows that 
\begin{equation}
\mathbf{r_z}(f) = \mathbf{(\bar{A} \otimes A)}\text{vec}(\mathbf{R}_\mathbf{x}(f) =  \mathbf{(\bar{A} \otimes A)} \mathbf{B} \mathbf{r_x}(f) \triangleq \mathbf{\Phi} \mathbf{r_x}(f),
\label{eq:rzrx}
\end{equation}
where $\bf \Phi=(\bar{A} \otimes A)B= \bar{A} \odot A$, and $\bf \bar{A}$ denotes the conjugate matrix of $\bf A$. 
Here $\otimes$ is the Kronecker product, $\odot$ denotes the Khatri-Rao product, $\mathbf{r_z}(f) = \text{vec}(\mathbf{R_z}(f))$, and $\bf B$ is a $N^2 \times N$ selection matrix that has a 1 in the $j$th column and $[(j-1)N+j]$th row, $1 \le j \le N$ and zeros elsewhere. Thus, $\mathbf{r}_{\mathbf{x}_i}(f)=P_x(f+ K_i f_s)$ and by recovering $\mathbf{r_x}(f)$ for all $f \in \mathcal{F}_s$, we recover the entire power spectrum of $x(t)$. 

We now discuss the sparsity of $\mathbf{r_x}(f)$ for the second and third scenarii. We chose $f_s \ge B$ so that each transmission contributes only a single non zero element to $\mathbf{r_x}(f)$ (referring to a specific $f$), and consequently $\mathbf{r_x}(f)$ has at most $K_f \ll N$ non zeros for each $f$ \cite{Mishali_theory}, corresponding to $S_x$.
In the next section, we derive conditions on the sampling rate for (\ref{eq:rzrx}) to have a unique solution.

It is interesting to note that (\ref{eq:rzrx}), which is written in the frequency domain, is valid in the time domain as well. We can therefore estimate $\mathbf{r_z}(f)$ and reconstruct $\mathbf{r_x}(f)$ in the frequency domain, or alternatively, we can estimate $\mathbf{r_z}[n]$ and reconstruct $\mathbf{r_x}[n]$ in the time domain using
\begin{equation}
\mathbf{r_z}[n] = \mathbf{\Phi} \mathbf{r_x}[n].
\label{eq:time}
\end{equation}
Note that $\mathbf{r_x}(f)$ is $K_f$-sparse for each specific frequency $f \in \mathcal{F}_S$, whereas $\mathbf{r_x}[n]$ is $2K_f$-sparse since each transmission can be split into two bins. Therefore, in Section \ref{sec:scen3}, we show that the minimal sampling rate is achieved only in the frequency domain.
Since the vectors $\mathbf{r_x}[n]$ are jointly sparse, we can recover the support $S_x$ from one sample in each channel, provided that the value of the samples in the occupied bins is not zero for each $n$. However, in order to ensure robustness to noise and better performance, we consider more than one sample in the simulations.

As a final comment, below we assume full knowledge of $\mathbf{r_z}(f)$ or $\mathbf{r_z}[n]$, or the possibility to compute them. In Section \ref{sec:simulations}, we show how to approximate $\mathbf{r_z}(f)$ and $\mathbf{r_z}[n]$ from a finite data block.

\subsection{Discrete Model: Reconstruction of the Digital Spectrum}
\label{Discrete}
In this model, we wish to recover the Fourier coefficients of the autocorrelation of $\mathbf{x}$, defined in (\ref{eq:spec2}). The traditional approach in this setting exploits the time domain characteristics of the stationary signal. Unfortunately, the analysis of the recovery conditions of the Fourier coefficients of $\mathbf{x}$ turns out to be quite involved. Therefore, we propose a second approach, that exploits the equivalent frequency domain properties of the signal. We show that in that case, the same analysis as for the analog model can be performed.

\subsubsection{Time domain}

Define the autocorrelation matrix as
\begin{eqnarray}
\mathbf{R_x}&=& \mathbb{E} \left[ \mathbf{x}[n]{\mathbf{x}}^H [n-\nu] \right] \\
&=& \left[ \begin{array}{cccc}
r_\mathbf{x}[0] & r_\mathbf{x}[1] & \dots & r_\mathbf{x}[N-1] \\
r_\mathbf{x}[1] & r_\mathbf{x}[0] & \dots & r_\mathbf{x}[N-2] \\
\vdots & \vdots & \ddots & \vdots \\
r_\mathbf{x}[N-1] & r_\mathbf{x}[N-2] & \dots &r_\mathbf{x}[0]
\end{array} \right]. \nonumber
\end{eqnarray}
From (\ref{eq:spec_autoco}),
\begin{equation}
\mathbf{s}_\mathbf{x} = \mathbf{F} \mathbf{r}_\mathbf{x},
\end{equation}
where $\bf s_x$ is defined in (\ref{eq:spec2}), $\mathbf{F}$ is the $N \times N$ DFT matrix and
\begin{equation}
\mathbf{r_x}= \left[ \begin{array}{cccc}
r_\mathbf{x}[0]  &
r_\mathbf{x}[1]  &
\dots  &
r_\mathbf{x}[N-1] 
\end{array} \right]^T.
\end{equation}
Therefore,
\begin{equation} \label{eq:vecx}
\text{vec}(\mathbf{R}_\mathbf{x}) = \mathbf{\tilde{B}} \mathbf{r_x} =\frac{1}{N}\mathbf{\tilde{B}} \mathbf{F}^{H} \mathbf{s}_\mathbf{x},
\end{equation}
where $\mathbf{\tilde{B}}$ is a $N^2 \times N$ repetition matrix whose $i$th row is given by the $\left[ \left| \lfloor \frac{i-1}{N} \rfloor - (i-1) \mod N \right| +1\right] $th row of the $N \times N$ identity matrix.

We now relate $\textbf{s}_\mathbf{x}$ to the covariance matrix of the sub-Nyquist samples ${\mathbf{R}_\mathbf{z}} = \mathbb{E} \left[ \mathbf{z}\mathbf{z}^{H} \right]$.
We start by deriving the relationship between $\textbf{R}_\mathbf{z}$ and $\textbf{R}_\mathbf{x}$. From (\ref{eq:dmodel1}), we have
\begin{equation}
\bf R_z = AR_xA^H.
\label{eq:autoco}
\end{equation}
Here, we assume that $\bf A$ is full spark.
Vectorizing both sides of (\ref{eq:autoco}) and using (\ref{eq:vecx}), we obtain
\begin{equation}
\mathbf{r_z} = \mathbf{(\bar{A} \otimes A)}\text{vec}(\mathbf{R}_\mathbf{x}) =  \frac{1}{N} \mathbf{(\bar{A} \otimes A)} \mathbf{\tilde{B}} \mathbf{F}^{H} \mathbf{s_x} \triangleq \mathbf{\Psi s_x},
\label{eq:rzsx}
\end{equation}
where $\bf \Psi=\frac{1}{N} (\bar{A} \otimes A)\tilde{B} F^{H}$ is of size $M^2 \times N$. We recall that $\bf {r}_z$ is a vector of size $M^2$ and $\bf s_x$ is a vector of size $N$.

Since $\bf F$ is invertible, $\text{rank}(\mathbf{\Phi}) = \text{rank}((\mathbf{\bar{A} \otimes A)\tilde{B}})$. Note that we can express $\mathbf{C}=(\mathbf{\bar{A} \otimes A)\tilde{B}}$ as
\begin{eqnarray} \nonumber
\mathbf{C} = \left[ \begin{array}{c}
\mathbf{\bar{a}_1} \otimes \mathbf{a_1} + \mathbf{\bar{a}_2} \otimes \mathbf{a_2} + \dots + \mathbf{\bar{a}_N} \otimes \mathbf{a_N} \\
\mathbf{\bar{a}_1} \otimes \mathbf{a_2} + \mathbf{\bar{a}_2} \otimes \mathbf{a_1} + \mathbf{\bar{a}_2} \otimes \mathbf{a_3} + \dots + \mathbf{\bar{a}_N} \otimes \mathbf{a_{N-1}} \\
\vdots \\
\mathbf{\bar{a}_1} \otimes \mathbf{a_N} + \mathbf{\bar{a}_N} \otimes \mathbf{a_1}
\end{array} \right]^T,
\end{eqnarray}
where $\mathbf{a}_j$ denotes the $j$th column of $\bf A$. Analyzing conditions for $\bf C$ to be full rank does not appear to be straightforward. We therefore propose instead to investigate the following frequency domain approach.

\subsubsection{Frequency domain}
From (\ref{xmodel}),
\begin{equation}
\mathbf{c}= \mathbf{F} \mathbf{x}.
\end{equation}
Here, $\mathbf{x}$ is given by (\ref{xsamp}), the entries of $\mathbf{c}$ are the Fourier coefficients of $\bf x$ (see (\ref{xmodel})) and $\bf F$ is the $N \times N$ DFT matrix.
Since $\bf F$ is orthogonal,
\begin{equation}
\label{eq:xc}
\mathbf{x}= \frac{1}{N} \mathbf{F}^{H} \mathbf{c}.
\end{equation}
Define the autocorrelation matrix $\mathbf {R_c}=\mathbb{E} \left[ \mathbf{c} \mathbf{c}^H \right]$. From (\ref{eq:spec3}), $\mathbf{R_c}$ is a diagonal matrix and it holds that $\mathbf{R_c}(i,i)=s_{\mathbf{x}}[i-(N+1)/2]$.
Clearly, our goal can be stated as recovery of $\mathbf{R_c}$, since once $\mathbf{R_c}$ is known, $\mathbf{s}_\mathbf{x}$ follows.

We now relate $\mathbf{R_c}$ to the correlation of the sub-Nyquist samples. A variety of different sub-Nyquist schemes can be used to sample $x(t)$ \cite{Mishali_multicoset, Mishali_theory,Laska2007}, even when its Fourier series is not sparse as we will show in Section \ref{sec:rate1}. Let $\bf{z} \in \mathbb{R}^M$ denote the vector of sub-Nyquist samples of $x(t), 0 \leq t < T$, sampled at rate $f_s$ with $f_s<N/T$. For simplicity, we assume that $M=f_sT<N$ is an integer.
We express the sub-Nyquist samples $\bf{z}$ in terms of the Nyquist samples $\bf{x}$ as
\begin{equation}
\bf{z} = \bf{A} \bf{x},
\label{eq:dmodel1}
\end{equation}
where $\mathbf{A}$ is a $M \times N$ matrix. Combining (\ref{eq:xc}) and (\ref{eq:dmodel1}), we obtain
\begin{equation}
\label{eq:zc}
 \mathbf{z} = \frac{1}{N} \mathbf{A} \mathbf{F}^{H} \mathbf{c} \triangleq \mathbf{G} \mathbf{c},
\end{equation}
where $\mathbf{G}= \frac{1}{N}\mathbf{A} \mathbf{F}^{H}$. We assume that $\bf G$ is full spark, namely $\text{spark}(\mathbf{G})=M+1$.

Let  ${\mathbf{R}_\mathbf{z}} = \mathbb{E} \left[ \mathbf{z}\mathbf{z}^{H} \right]$ be the covariance matrix of the sub-Nyquist samples.
We now relate $\mathbf{R}_\mathbf{z}$ to $\mathbf{R}_\mathbf{c}$. From (\ref{eq:zc}), we have
\begin{equation}
\bf R_z = GR_cG^H.
\label{eq:autoco1}
\end{equation}
Vectorizing both sides of (\ref{eq:autoco1}),
\begin{equation}
\mathbf{r_z} = \mathbf{(\bar{G} \otimes G)}\text{vec}(\mathbf{R}_\mathbf{c}) =  \mathbf{(\bar{G} \otimes G)} \mathbf{B} \mathbf{r_c} \triangleq \mathbf{\Phi \mathbf{r_c}}.
\label{eq:rzrc}
\end{equation}
Here, $\bf B$ is as defined in Section \ref{analog_rec}, $\bf \Phi=(\bar{G} \otimes G) \mathbf{B}$ is of size $M^2 \times N$ and $\bf r_c$ is a vector of size $N$ that contains the potentially non-zero elements, namely the diagonal elements, of $\mathbf{R_c}$, that is $\mathbf{r_c}(i)=\mathbf{R_c}(i,i), 1 \le i \le N$.

In the second and third scenarii, $\bf r_c$ has only $K_f \ll N$ non zero elements, which correspond to the $K_f$ non zero Fourier cofficients in $S_x$. In Section \ref{sec:rate}, we discuss the conditions for (\ref{eq:rzrc}) and (\ref{eq:rzsx}) to have a unique solution, and we derive the minimal sampling rate accordingly.

Again, we assume full knowledge of $\mathbf{r_z}$ and will show how it can be approximated in Section \ref{sec:simulations}.

We observe that we obtain a similar relation (\ref{eq:rzrx}) and (\ref{eq:rzrc}) in both models. Therefore, the next section refers to both together. We used the same notation for different parameters in the two models so that they both lead to the same relation. In order to avoid confusion, we summarize the notation in \mbox{Table \ref{table:parameters}}. 
\begin{table}
\begin{tabular}{l|l|l} 
Parameter & Analog Model & Discrete Model \\
\hline
$M$ & $\#$ measurements & $\#$ sub-Nyquist samples \\
$N$ & $\#$ frequency bins & $\#$ Nyquist samples \\
$K_f$ & $\#$ potentially non & $\#$ potentially non zero \\
& zero frequency bands &  fourier coefficients \\
$S_x$ & continuous support & discrete Fourier series \\
& of $x(t)$ & support
\end{tabular}
\caption{Parameters notation in both models}
 \label{table:parameters}
\end{table}
We also note that, in the analog model, we define an infinite number of equations (\ref{eq:rzrx}), more precisely one per frequency ($f \in \mathcal{F}_s$), whereas in the digital model we obtain a single equation (\ref{eq:rzrc}).

%%%%%%%%%%%%%%%%%%%%%%%%%%%%
\section{Minimal Sampling Rate and Reconstruction}
\label{sec:rate}

\subsection{No sparsity Constraints}
\label{sec:rate1}
\subsubsection{Minimal Rate for Perfect Reconstruction}

The systems defined in (\ref{eq:rzrx}) and (\ref{eq:rzrc}) are overdetermined for $M^2 \geq N$, if $\bf \Phi$ is full column rank. The following proposition provides conditions for the systems defined in (\ref{eq:rzrx}) and (\ref{eq:rzrc}) to have a unique solution.

\begin{proposition} \label{prop:first}
Let $\bf T$ be a full spark $M \times N$ matrix ($M \le N$) and $\bf B$ be a $N^2 \times N$ selection matrix that has a 1 in the $j$th column and $[(j-1)N+j]$th row, $1 \le j \le N$ and zeros elsewhere. The matrix $\bf C = \mathbf{(\bar{T} \otimes T)} \mathbf{B}=\bar{T} \odot T$ is full column rank  if $M^2 \ge N$ and $2M > N$.
\end{proposition}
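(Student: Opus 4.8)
The plan is to translate the claim into a statement about linear independence of rank-one matrices and then exploit the full spark hypothesis through a pair of carefully chosen test vectors. First I would observe that the $j$th column of $\mathbf{C} = \bar{\mathbf{T}} \odot \mathbf{T}$ is exactly $\bar{\mathbf{t}}_j \otimes \mathbf{t}_j = \mathrm{vec}(\mathbf{t}_j \mathbf{t}_j^H)$, where $\mathbf{t}_j$ denotes the $j$th column of $\mathbf{T}$. Since vectorization is a linear isomorphism, $\mathbf{C}$ has full column rank if and only if the $N$ rank-one matrices $\{\mathbf{t}_j \mathbf{t}_j^H\}_{j=1}^N$ are linearly independent in the space of $M \times M$ matrices. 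The condition $M^2 \ge N$ is just the necessary dimension count that makes rank $N$ attainable; the substantive content is that $2M > N$ suffices.

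Then I would argue by contradiction: suppose $\sum_{j=1}^N \alpha_j \mathbf{t}_j \mathbf{t}_j^H = \mathbf{0}$ with support $S = \{j : \alpha_j \ne 0\}$ nonempty, so that $1 \le |S| \le N \le 2M - 1$. Fix any $j_0 \in S$. Because $|S| - 1 \le 2M - 2$, I can split $S \setminus \{j_0\} = A \sqcup B$ into two blocks with $|A| \le M-1$ and $|B| \le M-1$ (take the sizes as equal as possible). The idea is to isolate the single coefficient $\alpha_{j_0}$ by using a \emph{left} test vector $\mathbf{u}$ that kills the columns indexed by $A$ and a \emph{right} test vector $\mathbf{v}$ that kills the columns indexed by $B$.

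The full spark hypothesis enters precisely in constructing these vectors. Since $|A|+1 \le M$, the columns $\{\mathbf{t}_j : j \in A \cup \{j_0\}\}$ are linearly independent, so there exists $\mathbf{u}$ with $\mathbf{u}^H \mathbf{t}_j = 0$ for $j \in A$ and $\mathbf{u}^H \mathbf{t}_{j_0} = 1$; symmetrically, since $|B|+1 \le M$, there exists $\mathbf{v}$ with $\mathbf{t}_j^H \mathbf{v} = 0$ for $j \in B$ and $\mathbf{t}_{j_0}^H \mathbf{v} = 1$ (each is a consistent biorthogonality system against at most $M$ independent columns). Evaluating the vanishing bilinear form then gives $0 = \mathbf{u}^H \big( \sum_{j} \alpha_j \mathbf{t}_j \mathbf{t}_j^H \big) \mathbf{v} = \sum_{j \in S} \alpha_j (\mathbf{u}^H \mathbf{t}_j)(\mathbf{t}_j^H \mathbf{v})$, where every term with $j \in A$ dies through the left factor and every term with $j \in B$ dies through the right factor, leaving only $\alpha_{j_0}$. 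Hence $\alpha_{j_0} = 0$, contradicting $j_0 \in S$. As $j_0$ was arbitrary, $S$ must be empty, so the rank-one matrices are independent and $\mathbf{C}$ is full column rank.

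The main obstacle is exactly the regime $|S| > M$: there the columns $\{\mathbf{t}_j\}_{j \in S}$ are no longer linearly independent, so a single quadratic form $\mathbf{u}^H \mathbf{C} \mathbf{u}$ cannot pick out one coefficient, and the elementary argument that works when $|S| \le M$ breaks down. The split-support device with two distinct test vectors is what overcomes this, and the inequality $|S| - 1 \le 2(M-1)$ — i.e. $2M > N$ — is precisely the condition guaranteeing the partition into two blocks of size at most $M-1$. I expect verifying the existence of $\mathbf{u}$ and $\mathbf{v}$ and checking the boundary cases ($|S|=1$, or blocks of unequal parity) to be routine rather than delicate.
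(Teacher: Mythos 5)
Your proof is correct, but it takes a genuinely different route from the paper's. The paper works with a vector $\mathbf{x}$ in the null space of $\mathbf{C}$, forms the $M$ vectors $\mathbf{h}_i$ with entries $t_{ij}x_j$ (the rows of $\mathbf{T}\,\mathrm{diag}(\mathbf{x})$), observes that each must lie in the null space of $\bar{\mathbf{T}}$, and splits on the support size $N_z$ of $\mathbf{x}$: for $N_z\le M$ full spark forces $\mathbf{h}_i=0$ and hence $\mathbf{x}=0$; for $N_z>M$ full spark makes the $M$ vectors $\mathbf{h}_i$ linearly independent, so they span an $M$-dimensional subspace of the $(N-M)$-dimensional null space of $\bar{\mathbf{T}}$, impossible when $M>N-M$. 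You instead recast full column rank as linear independence of the rank-one matrices $\mathbf{t}_j\mathbf{t}_j^H$ and isolate one coefficient at a time with a left/right pair of test vectors, using $2M>N$ only to partition the remaining support into two blocks of size at most $M-1$, each annihilated by one test vector; the existence of $\mathbf{u}$ and $\mathbf{v}$ is indeed routine given full spark. Both arguments invoke $2M>N$ but in different roles (a rank--nullity count versus a partition count), and your proof needs no case split on $|S|$ versus $M$. A bonus of your version is that it directly establishes that \emph{any} $2M-1$ columns of $\bar{\mathbf{T}}\odot\mathbf{T}$ are linearly independent, i.e. $\mathrm{spark}(\bar{\mathbf{T}}\odot\mathbf{T})\ge 2M$, which is precisely the stronger fact the paper needs later for the blind sparse scenario and obtains there via Proposition 2; conversely, the paper's rank--nullity argument is somewhat shorter and avoids constructing biorthogonal vectors. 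Your side remark that $M^2\ge N$ is only a dimension count is accurate---in fact it is implied by $2M>N$ for every $M\ge 1$.
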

\begin{proof}
First, we require $M^2 \ge N$ in order for $\bf C$ to have a smaller or equal number of columns than of rows.

Let $\bf x$ be a vector of length $N$ in the null space of $\bf C$, namely $\bf Cx=0$. We show that if $2M > N$, then $\bf x =0$.
Assume by contradiction that $\bf x \neq 0$. We denote by $\Omega_z$ the set of indices $1 \le j \le N$ such that $x_j \neq 0$ and $N_z=| \Omega_z |$. It holds that $1 \le N_z \le N$.

Note that we can express $\bf C$ as
\begin{eqnarray} \nonumber
\mathbf{C} &=& \left[ \begin{array}{cccc}
\mathbf{\bar{t}_1} \otimes \mathbf{t_1} & \mathbf{\bar{t}_2} \otimes \mathbf{t_2} & \dots & \mathbf{\bar{t}_N} \otimes \mathbf{t_N}
\end{array} \right],
\end{eqnarray}
where $\mathbf{t}_j$ denotes the $j$th column of $\bf T$. Let
\begin{equation} \nonumber
\mathbf{h}_i = \left[ \begin{array}{cccc} t_{i1} x_1 & t_{i2} x_2 & \dots & t_{iN} x_N
\end{array} \right]^T, \quad 1 \le i \le M.
\end{equation}
Then $\bf Cx=0$ if and only if
\begin{equation}
\label{eq:proof1}
\mathbf{\bar{T}} \mathbf{h}_i = 0, \quad 1 \le i \le M.
\end{equation}
That is, the $M$ vectors $\mathbf{h}_i$ are in the null space of $\mathbf{\bar{T}}$.

If $N_z \le M$, then since $\bf \bar{T}$ is full spark, (\ref{eq:proof1}) holds if and only if $t_{ij}x_j=0, \forall 1 \le i \le M \text{ and } \forall j \in \Omega_z$. Again, since $\mathbf{T}$ is full spark, none of its columns is the zero vector and therefore that $x_j=0, \forall j \in \Omega_z$ and we obtained a contradiction. 

If $N_z > M$, then we show that the vectors $\mathbf{h}_i, 1 \le i \le M$ are linearly independent. Since $\mathbf{T}$ is full spark, every set of $M$ columns are linearly independent. Let us consider $M$ columns $\mathbf{t}_j$ of $\bf T$ such that $ j \in \Omega_z$. It follows that
\begin{equation} \nonumber
\sum_j \gamma_j x_j \mathbf{t}_j=0
\end{equation}
if and only if $\gamma_j x_j = 0$. From the definition of $\Omega_j$, this holds if and only if $\gamma_j = 0$, that is the $M$ vectors $x_j\mathbf{a}_j$ are linearly independent. Thus, the $M$ vectors $\mathbf{h}_i$ are linearly independent as well.
We denote by $\text{nullity}(\mathbf{T})$ the dimension of the null space of $\bf T$. From the rank-nullity theorem, $\text{nullity}(\mathbf{\bar{T}})=N - \text{rank}(\mathbf{\bar{T}})=N-M$. Since the dimension of the space spanned by $\mathbf{h}_i$ is $M$, if $M>N-M$, then $\bf x=0$.
\end{proof}

The following theorem follows directly from Proposition \ref{prop:first}.
\begin{theorem}
The systems (\ref{eq:rzrx}) (analog model) and (\ref{eq:rzrc}) (digital model) have a unique solution if
\begin{enumerate}
\item $\bf A$ in the analog model and $\bf AF^H$ in the digital model are full spark.
\item $M^2 \ge N$ and $2M > N$.
\end{enumerate}
\end{theorem}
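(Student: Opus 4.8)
The plan is to observe that this theorem is an almost immediate corollary of Proposition \ref{prop:first}, once one recalls that a \emph{consistent} linear system $\mathbf{y}=\mathbf{\Phi}\mathbf{v}$ admits a unique solution precisely when $\mathbf{\Phi}$ has full column rank (equivalently, a trivial null space). Both systems (\ref{eq:rzrx}) and (\ref{eq:rzrc}) are consistent by construction, since the true vectors $\mathbf{r_x}(f)$ and $\mathbf{r_c}$ generate the measured correlations $\mathbf{r_z}$. Thus it suffices to show that each system matrix $\mathbf{\Phi}$ is full column rank under the stated hypotheses, and for this I would simply match $\mathbf{\Phi}$ to the matrix $\mathbf{C}=\bar{\mathbf{T}}\odot\mathbf{T}$ of Proposition \ref{prop:first}.

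For the analog model, $\mathbf{\Phi}=(\bar{\mathbf{A}}\otimes\mathbf{A})\mathbf{B}=\bar{\mathbf{A}}\odot\mathbf{A}$, so I would apply Proposition \ref{prop:first} with $\mathbf{T}=\mathbf{A}$. Since $\mathbf{A}$ is assumed full spark and the rate conditions give $M^2\ge N$ and $2M>N$, the proposition yields that $\mathbf{\Phi}$ is full column rank, and uniqueness of the solution of (\ref{eq:rzrx}) follows for every $f\in\mathcal{F}_s$.

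For the digital model, $\mathbf{\Phi}=(\bar{\mathbf{G}}\otimes\mathbf{G})\mathbf{B}=\bar{\mathbf{G}}\odot\mathbf{G}$ with $\mathbf{G}=\frac{1}{N}\mathbf{A}\mathbf{F}^{H}$, so I would apply Proposition \ref{prop:first} with $\mathbf{T}=\mathbf{G}$. The one point that deserves a line of justification is that the hypothesis is phrased in terms of $\mathbf{A}\mathbf{F}^{H}$ rather than $\mathbf{G}$: since the spark of a matrix is invariant under multiplication by the nonzero scalar $1/N$ (scaling all columns does not affect their linear independence), $\mathbf{G}$ is full spark if and only if $\mathbf{A}\mathbf{F}^{H}$ is, which is assumed. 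Proposition \ref{prop:first} then gives that $\mathbf{\Phi}$ is full column rank, and hence (\ref{eq:rzrc}) has a unique solution.

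There is no genuine obstacle here; the content is entirely carried by Proposition \ref{prop:first}. The only verifications worth flagging are the routine Khatri--Rao identity $(\bar{\mathbf{T}}\otimes\mathbf{T})\mathbf{B}=\bar{\mathbf{T}}\odot\mathbf{T}$ (already invoked in the proposition) and the scalar-invariance of full spark in the digital case. Both are immediate, so the theorem indeed follows directly from the proposition.
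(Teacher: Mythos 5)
Your proposal matches the paper's own treatment: the paper simply states that the theorem ``follows directly from Proposition \ref{prop:first}'', and your instantiations $\mathbf{T}=\mathbf{A}$ (analog) and $\mathbf{T}=\mathbf{G}=\frac{1}{N}\mathbf{A}\mathbf{F}^{H}$ (digital) are exactly the intended application. The extra remarks you add --- consistency of the systems and invariance of full spark under the scalar $1/N$ --- are correct and only make explicit what the paper leaves implicit.
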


This can happen even for $M<N$ which is our basic assumption. If $M \ge 2$, we have $M^2 \ge 2M$. Thus, in this case, the values of $M$ for which we obtain a unique solution are $N/2 < M < N$. The minimal sampling rate is then
\begin{equation}
f_{(1)}=Mf_s>\frac{N}{2}B=\frac{f_{\text{Nyq}}}{2}.
\end{equation}

This means that even without any sparsity constraints on the signal, we can retrieve its power spectrum by exploiting its stationarity property, whereas the measurement vector $\bf z$ exhibits no stationary constraints in general. This was already observed in \cite{TianLeus} for the digital model, but no proof was provided. In \cite{wang}, the authors show that $2M>N$ is a sufficient condition on $M$ so that $\bf \Phi$ is full column rank in the analog model. Then, a universal sampling pattern can guarantee the full column rank of $\bf \Phi$.
In \cite{Davies, Davies2, wang}, the authors claim that the system is overdetermined if $M(M-1)+1 \ge N$ and if the multicoset sampling pattern is such that it yields a full column rank matrix $\bf \Phi$. In \cite{wang}, some simple sub-optimal multicoset sampling patterns are given, that achieve compression rate below $1/2$. Some examples of optimal patterns, namely that guarantee a unique solution under $M(M-1)+1 = N$, are given in \cite {Davies, Davies2} but it is not clear what condition is required from the pattern, or alternatively from the sampling matrix $\bf A$, in order for $\bf \Phi$ to have full column rank. Here, the condition for having a solution is given with respect to the sampling matrix $\bf A$, which directly depends on the sampling pattern, rather than the matrix $\bf \Phi$.

\subsubsection{Power Spectrum Reconstruction}
If the conditions of Proposition \ref{prop:first} are satisfied, namely if the sampling rate $f \ge f_{(1)}$, then the systems defined in (\ref{eq:rzrx}) and (\ref{eq:rzrc}) are overdetermined, respectively. The power spectrums $\mathbf{r_x}(f)$ and $\bf r_c$ are given by
\begin{equation} \label{eq:rec11}
\mathbf{\hat{r}_x}(f) = \mathbf{\Phi}^{\dagger} \mathbf{r_z}(f),
\end{equation}
in the analog model, and
\begin{equation} \label{eq:rec12}
\bf \hat{r}_c = \Phi^{\dagger} r_z,
\end{equation}
in the digital one. Here $\dagger$ denotes the Moore-Penrose pseudo-inverse.

%\emph{Remark:} We note that uniform sampling at half the Nyquist rate does not allow spectrum recovery.
%
%\emph{Proof:} We can model uniform sampling at half the Nyquist rate as multicoset sampling with $M=2$, $N=2M=4$, $c_0=0$ and $c_1=2$. In that case,
%\begin{equation}
%\mathbf{A}= \left[ \begin{array}{rrrr}
%1 & 1 & 1 & 1 \\
%1 & -1 & 1 & -1 \\
%\end{array} \right],
%\end{equation} 
%and
%\begin{equation}
%\mathbf{\Phi}= \left[ \begin{array}{rrrr}
%1 & 1 & 1 & 1 \\
%1 & -1 & 1 & -1 \\
%1 & -1 & 1 & -1 \\
%1 & 1 & 1 & 1 \\
%\end{array} \right].
%\end{equation} 
%It is easy to see that $\text{rank}(\mathbf{\Phi}) = 2 < N$, namely $\Phi$ is not full column rank.

\subsection{Sparsity Constraints - Non-Blind Detection}
\label{rate2}

\subsubsection{Minimal Rate for Perfect Reconstruction}
We now consider the second scheme, where we have \emph{a priori} knowledge on the frequency support of $x(t)$ and we assume that it is sparse. 
Instead of reconstructing the entire power spectrum, we exploit the knowledge of the signal's frequencies in order to recover the potentially occupied bands (analog model) or the potential Fourier series coefficients of the autocorrelation function (discrete model). This will allow us to further reduce the sampling rate.

In this scenario, $\mathbf{r_x}(f)$ (first model) and $\bf r_c$ (second model) contain only $K_f \ll N$ potentially non zero elements as discussed in Section \ref{SecOpt}. In the first model, the reduced problem can be expressed as
\begin{equation}
\mathbf{r_z}(f) = \mathbf{\Phi}_S\mathbf{r}_{\mathbf{x}}^S(f).
\label{eq:rzrx2}
\end{equation}
Here, $\mathbf{r}_{\mathbf{x}}^S(f)$ is the vector $\mathbf{r_x}(f)$ reduced to its $K_f$ potentially non zero elements and $\mathbf{\Phi}_S$ contains the corresponding $K_f$ columns of $\bf \Phi$.
Here, the support $S$ of $\mathbf{r_x}(f)$ depends on the specific frequency $f$ since the support of the power spectrum of each transmission $s_i(t)$ can be split into two different bins of $\mathbf{r_x}(f)$. Obviously, $S$ can be calculated for each $f$ from the known $S_x$.

In the second model, the reduced problem becomes
\begin{equation}
\mathbf{r_z} = \mathbf{\Phi}_S\mathbf{r}_{\mathbf{c}}^S.
\label{eq:rzrc2}
\end{equation}
Here, $\mathbf{r}_{\mathbf{c}}$ is the reduction of $\bf r_c$ to its $K_f$ potentially non zero elements, $\mathbf{\Psi}_S$ contains the corresponding $K_f$ columns of $\bf \Psi$.
In the digital case, it holds that the support $S=S_x$.

The following proposition provides conditions for the systems defined in (\ref{eq:rzrx2}) and (\ref{eq:rzrc2}) to have a unique solution.
\begin{proposition} \label{prop:second}
Let $\bf T$ be a full spark $M \times N$ matrix ($M \le N$) and $\bf B$ be defined as in Proposition \ref{prop:first}. Let $\bf C = \mathbf{(\bar{T} \otimes T)}B$ and $\mathbf{H}$ be the $N \times K_f$ that selects any $K_f < N$ columns of $\bf C$. The matrix $\bf D=CH$ is full column rank if $M^2 \ge K_f$ and $2M> K_f$.
\end{proposition}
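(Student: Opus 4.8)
The plan is to mirror the null-space argument of Proposition~\ref{prop:first}, observing that $\mathbf{D}=\mathbf{CH}$ is simply the submatrix of $\mathbf{C}$ consisting of the $K_f$ columns picked out by $\mathbf{H}$. Writing $S$ for the corresponding index set ($|S|=K_f$), the matrix $\mathbf{D}$ has full column rank if and only if the only vector $\mathbf{x}\in\mathbb{C}^N$ supported on $S$ with $\mathbf{Cx}=0$ is $\mathbf{x}=0$. First I would require $M^2\ge K_f$ so that $\mathbf{D}$ has at least as many rows as columns, which is necessary for full column rank.

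Next I would recycle the reduction from Proposition~\ref{prop:first}: using the column form $\mathbf{C}=[\,\bar{\mathbf{t}}_1\otimes\mathbf{t}_1\ \cdots\ \bar{\mathbf{t}}_N\otimes\mathbf{t}_N\,]$ and setting $\mathbf{h}_i=[\,t_{i1}x_1\ \cdots\ t_{iN}x_N\,]^T$, the identity $\mathbf{Cx}=0$ is equivalent to $\bar{\mathbf{T}}\mathbf{h}_i=0$ for all $1\le i\le M$. The essential difference with Proposition~\ref{prop:first} is that, since $\mathbf{x}$ is supported on $S$, every $\mathbf{h}_i$ vanishes outside $\Omega_z:=\mathrm{supp}(\mathbf{x})\subseteq S$, so that $N_z:=|\Omega_z|\le K_f$. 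Assuming $\mathbf{x}\ne 0$, the case $N_z\le M$ is dispatched exactly as before: full spark of $\bar{\mathbf{T}}$ forces $t_{ij}x_j=0$ for all $i$ and all $j\in\Omega_z$, and since no column of $\mathbf{T}$ is zero this yields $x_j=0$ on $\Omega_z$, a contradiction.

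The hard part will be the case $N_z>M$, where the weaker hypothesis $2M>K_f$ (in place of $2M>N$) has to be brought in. As in Proposition~\ref{prop:first}, full spark of $\mathbf{T}$ guarantees that the $M$ vectors $\mathbf{h}_i$ are linearly independent. The key refinement is that, because each $\mathbf{h}_i$ is supported on $\Omega_z$, they lie not merely in the null space of $\bar{\mathbf{T}}$ but in the null space of the $M\times N_z$ submatrix $\bar{\mathbf{T}}_{\Omega_z}$ formed by the columns indexed by $\Omega_z$. Since $N_z>M$ and $\bar{\mathbf{T}}$ is full spark, $\bar{\mathbf{T}}_{\Omega_z}$ has full row rank $M$, so its null space has dimension $N_z-M$; fitting $M$ linearly independent vectors inside it demands $M\le N_z-M$, i.e.\ $N_z\ge 2M$. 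But $N_z\le K_f<2M$ by assumption, a contradiction, so $\mathbf{x}=0$ in all cases and $\mathbf{D}$ has full column rank. I expect the only subtlety to be precisely this localization of the rank-nullity count to the restricted submatrix $\bar{\mathbf{T}}_{\Omega_z}$: the global bound $\mathrm{nullity}(\bar{\mathbf{T}})=N-M$ used in Proposition~\ref{prop:first} is useless here, since with $K_f$ possibly far below $N$ the assumption $2M>K_f$ gives no control over the full null space of $\bar{\mathbf{T}}$.
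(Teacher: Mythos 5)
Your proof is correct, but it follows a somewhat different route from the paper's. The paper notes that $\mathbf{D}=\bar{\mathbf{T}}_S\odot\mathbf{T}_S$, where $\mathbf{T}_S$ is the $M\times K_f$ submatrix of selected columns of $\mathbf{T}$, observes that $\mathbf{T}_S$ inherits the full spark property, and for $K_f\ge M$ simply invokes Proposition~\ref{prop:first} with $N$ replaced by $K_f$; the case $K_f<M$, where Proposition~\ref{prop:first} does not apply since it assumes $M\le N$, is handled separately via $\text{rank}(\bar{\mathbf{T}}_S\otimes\mathbf{T}_S)=\text{rank}(\mathbf{T}_S)^2$. You instead rerun the null-space argument directly, and your key step---that the $\mathbf{h}_i$ are supported on $\Omega_z$ and hence their restrictions lie in the null space of the $M\times N_z$ submatrix $\bar{\mathbf{T}}_{\Omega_z}$, whose nullity is $N_z-M$ rather than $N-M$---is precisely the localization the paper achieves implicitly by passing to $\mathbf{T}_S$. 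Each version buys something: the paper's reduction is shorter and modular, while yours is self-contained, absorbs the $K_f<M$ case automatically (there $N_z\le K_f<M$ forces the easy branch), and is marginally sharper in that the contradiction $N_z\ge 2M$ is derived from the actual support size $N_z$ rather than from $K_f$. The one step you should make explicit is that the restrictions $\mathbf{h}_i|_{\Omega_z}$ remain linearly independent, which holds because each $\mathbf{h}_i$ vanishes off $\Omega_z$.
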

\begin{proof}
First, we require $M^2 \ge K_f$ in order for $\bf D$ to have a smaller or equal number of columns than of rows.
Let $\mathbf{T}_S$ be the $M \times K_f$ matrix composed of the $K_f$ columns of $\bf T$ corresponding to the $K_f$ selected columns of $\bf C$:
\begin{equation} \nonumber
\mathbf{T}_S = \left[ \begin{array}{cccc} 
\mathbf{t_{[1]}} &  \mathbf{t_{[2]}} & \dots & \mathbf{t_{[K_f]}}
\end{array} \right].
\end{equation}
Here $\mathbf{t}_{[i]}, 1 \le i \le K_f$ denotes the column of $\bf t$ corresponding to the $i$th selected column of $\bf C$.
We have 
\begin{equation} \nonumber
\label{eq:matrixD}
\mathbf{D} = \left[ \begin{array}{cccc}
\mathbf{\bar{t}_{[1]}} \otimes \mathbf{t_{[1]}} & \mathbf{\bar{t}_{[2]}} \otimes \mathbf{t_{[2]}} & \dots & \mathbf{\bar{t}_{[K_f]}} \otimes \mathbf{t_{[K_f]}}
\end{array} \right].
\end{equation}

If $K_f \ge M$, then, since $\bf \text{spark}(T)$$ = M+1$, $\mathbf{T}_S$ is full spark as well. Applying Proposition \ref{prop:first} with $\mathbf{T}_S$, we have that $\bf D$ is full column rank if $2M> K_f$.

If $K_f<M$, then from $\bf \text{spark}(T)$$ = M+1>K_f$, $\mathbf{T}_S$ is full column rank. Since $\text{rank} (\bar{\mathbf{T}}_S \otimes \mathbf{T}_S) = \text{rank} (\mathbf{T}_S)^2$$ = K_f^2$, the matrix $\bar{\mathbf{T}}_S \otimes \mathbf{T}_S$ is also full column rank. It can be seen that the matrix $\bf D$ is obtained by selecting $K_f$ columns from $\bar{\mathbf{T}}_S \otimes \mathbf{T}_S$. It follows that $\bf D$ is full column rank as well.
\end{proof}

The following theorem follows directly from Proposition \ref{prop:second}.
\begin{theorem}
The systems (\ref{eq:rzrx2}) (analog model) and (\ref{eq:rzrc2}) (digital model) have a unique solution if
\begin{enumerate}
\item $\bf A$ in the analog model and $\bf AF^H$ in the digital model are full spark.
\item $M^2 \ge K_f$ and $2M > K_f$.
\end{enumerate}
\end{theorem}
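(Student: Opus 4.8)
The plan is to obtain this theorem as an immediate consequence of Proposition \ref{prop:second}, since the two reduced systems (\ref{eq:rzrx2}) and (\ref{eq:rzrc2}) are already written in the form $\mathbf{r_z} = \mathbf{\Phi}_S \mathbf{r}^S$, where $\mathbf{\Phi}_S$ gathers the $K_f$ columns of $\mathbf{\Phi}$ indexed by the known support $S$. First I would recall the elementary fact that a linear system whose matrix has full column rank admits at most one solution; since the true sparse vector is a solution by construction, it is then the unique one. Hence it suffices to prove that $\mathbf{\Phi}_S$ has full column rank under the stated hypotheses.

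The key step is to match each model's measurement matrix to the template $\mathbf{C}=(\bar{\mathbf{T}} \otimes \mathbf{T})\mathbf{B}$ of Proposition \ref{prop:second}, and to recognize $\mathbf{\Phi}_S$ as the column selection $\mathbf{D}=\mathbf{C}\mathbf{H}$ appearing there. In the analog model, comparing with (\ref{eq:rzrx}) shows $\mathbf{\Phi}=(\bar{\mathbf{A}} \otimes \mathbf{A})\mathbf{B}$, so I would set $\mathbf{T}=\mathbf{A}$, and the full-spark assumption on $\mathbf{A}$ is exactly the full-spark hypothesis required of $\mathbf{T}$. In the digital model, (\ref{eq:rzrc}) gives $\mathbf{\Phi}=(\bar{\mathbf{G}} \otimes \mathbf{G})\mathbf{B}$ with $\mathbf{G}=\frac{1}{N}\mathbf{A}\mathbf{F}^{H}$, so I would set $\mathbf{T}=\mathbf{G}$. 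Here I would observe that scaling a matrix by the nonzero constant $1/N$ leaves its spark unchanged, so assuming $\mathbf{A}\mathbf{F}^{H}$ is full spark is equivalent to $\mathbf{G}$ being full spark, which is precisely the hypothesis on $\mathbf{T}$.

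With $\mathbf{T}$ identified and full spark in each case, Proposition \ref{prop:second} directly yields that $\mathbf{\Phi}_S=\mathbf{D}$ is full column rank whenever $M^2 \ge K_f$ and $2M>K_f$, and full column rank in turn gives a unique solution to each reduced system. I do not expect a genuine obstacle, since all the substance is already contained in Proposition \ref{prop:second}; the remaining work is bookkeeping. The only point deserving a careful sentence is the digital-model identification, where one must confirm that condition~(1) of the theorem, stated as full spark of $\mathbf{A}\mathbf{F}^{H}$, is the correct translation of ``$\mathbf{T}$ is full spark'' after absorbing the harmless $1/N$ factor into $\mathbf{G}$.
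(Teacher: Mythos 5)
Your proposal is correct and matches the paper exactly: the paper itself offers no separate proof beyond stating that the theorem ``follows directly from Proposition~\ref{prop:second},'' and your write-up simply supplies the bookkeeping (identifying $\mathbf{T}=\mathbf{A}$ resp.\ $\mathbf{T}=\mathbf{G}=\frac{1}{N}\mathbf{A}\mathbf{F}^{H}$, noting the scalar $1/N$ does not affect the spark, and invoking the proposition for the selected columns $\mathbf{\Phi}_S$). The only detail worth adding is that in the analog model the support $S$, and hence the selected column set, varies with $f$, but Proposition~\ref{prop:second} applies to \emph{any} choice of $K_f$ columns, so the argument goes through for every $f\in\mathcal{F}_s$.
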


In this case, the minimal sampling rate is
\begin{equation}
f_{(2)}=Mf_s>\frac{K_f}{2}B=N_{\text{sig}}B.
\end{equation}
Landau \cite{LandauCS} developed a minimal rate requirement for perfect signal reconstruction in the non-blind setting, which corresponds to the actual band occupancy, namely $2N_{\text{sig}}B$. Here, we find that the minimal sampling rate for perfect spectrum recovery in this setting is half the Landau rate.

\subsubsection{Power Spectrum Reconstruction}
If the conditions of Proposition \ref{prop:second} are satisfied, namely the sampling rate $f \ge f_{(2)}$, then we can reconstruct the signal's power spectrum by first reducing the systems as shown in (\ref{eq:rzrx2}) and (\ref{eq:rzrc2}). The reconstructed power spectrum $\mathbf{r_x}(f)$ and $\bf r_c$ are given by
\begin{eqnarray}
\mathbf{\hat{r}_x}^S(f) &=& \mathbf{\Phi}_S^{\dagger} \mathbf{r_z}(f) \\
\mathbf{\hat{r}}_{\mathbf{x}_i}(f) &=& 0 \quad \forall i \notin S, \nonumber
\end{eqnarray}
in the first model, and
\begin{eqnarray}
\mathbf{\hat{r}_c} &=&\mathbf{\Phi}_S^{\dagger} \mathbf{r_z} \\
\mathbf{\hat{r}}_{\mathbf{c}_i}&=& 0 \quad \forall i \notin S, \nonumber
\end{eqnarray}
in the second one.

\subsection{Sparsity Constraints - Blind Detection}
\label{sec:scen3}
\subsubsection{Minimal Rate for Perfect Reconstruction}

We now consider the third scheme, namely $x(t)$ is sparse, without any \emph{a priori} knowledge on the support. In the previous section, we showed that $\mathbf{\Phi}_S$ is full column rank, for any choice of $K_f<2M$ columns of $ \bf \Phi$. Thus, for $M \ge 2$, we have $\text{spark}(\mathbf{\Phi}) \ge 2M$. Therefore, in the blind setting, if $\mathbf{r_x}(f)$ or $\bf r_c$ is $K_f$-sparse, with $K_f < M$, it is the unique sparsest solution to (\ref{eq:rzrx}) or (\ref{eq:rzrc}), respectively \cite{CSBook}.
In this case, the minimal sampling rate is
\begin{equation}
\label{eq:min3}
f_{(3)}= Mf_s>K_fB=2N_{\text{sig}}B,
\end{equation}
which is twice the rate obtained in the previous scenario. As in signal recovery, the minimal rate for blind reconstruction is twice the minimal rate for non-blind reconstruction \cite{Mishali_multicoset}. 

The authors in \cite{Davies2} consider the sparse case as well for a model similar to our analog model. Again, the conditions for the system to be overdetermined are given with respect to $\bf \Phi$, as in the non sparse case. Moreover, the authors reconstruct the average spectrum of the signal over each bin, rather than the spectrum itself at each frequency. Here, the two approaches become fundamentally different since in this scenario, we deal with a system of equations of infinite measure whereas in \cite{Davies2}, the authors obtain a standard compressed sensing problem aiming at recovering a finite vector.

\subsubsection{Power Spectrum Reconstruction}
In this scenario, there exists an inherent difference between the two models. In the digital model, we have to solve a single equation (\ref{eq:rzrc}) whereas in the analog model, (\ref{eq:rzrx}) consists of an infinite number of linear systems because $f$ is a continuous variable.

Therefore, in the digital case, we can use classical compressed sensing (CS) techniques \cite{CSBook} in order to recover the sparse vector $\bf r_c$ from the measurement vector $\bf r_z$, namely
\begin{equation}
\mathbf{\hat{r}_c} = \arg\min_{\mathbf{r_c}} || \mathbf{r_c} ||_0 \quad \text{s.t. } \mathbf{r_z=\Phi r_c}.
\end{equation}

In the analog model, the reconstruction can be divided into two stages: support recovery and spectrum recovery. We use the support recovery paradigm from \cite{Mishali_multicoset} that produces a finite system of equations, called multiple measurement vectors (MMV) from an infinite number of linear systems. This reduction is performed by what is referred to as the continuous to finite (CTF) block. From (\ref{eq:rzrx}), we have
\begin{equation}
\bf Q = \Phi Z \Phi^H
\end{equation}
where
\begin{equation}
\mathbf{Q}= \int_{f \in \mathcal{F}_s} \mathbf{r_z}(f) \mathbf{r_z}^H(f) \mathrm{d}f
\end{equation}
is a $M \times M$ matrix and
\begin{equation}
\mathbf{Z}= \int_{f \in \mathcal{F}_s} \mathbf{r_x}(f) \mathbf{r_x}^H(f) \mathrm{d}f
\end{equation}
is a $N \times N$ matrix. We then construct a frame $\bf V$ such that $\bf Q=VV^H$. Clearly, there are many possible ways to select $\bf V$. We choose to construct it by performing an eigendecomposition of $\bf Q$ and select $\bf V$ as the matrix of eigenvectors corresponding to the non zero eigenvalues. We can then define the following linear system
\begin{equation} \label{eq:CTF}
\bf V= \Phi U
\end{equation}
From \cite{Mishali_multicoset} (Propositions 2-3), the support of the unique sparsest solution of (\ref{eq:CTF}) is the same as the support of our original set of equations (\ref{eq:rzrx}).

As discussed in Section \ref{SecOpt}, $\mathbf{r_x}(f)$ is $K_f$-sparse for each specific $f \in \mathcal{F}_s$. However, after combining the frequencies, the matrix $\bf U$ is $2K_f$-sparse (at most), since the spectrum of each transmission can be split into two bins of $\mathbf{r_x}(f)$. Therefore, the above algorithm, referred to as SBR4 in \cite{Mishali_multicoset} (for signal reconstruction as opposed to spectrum reconstruction), requires a minimal sampling rate of $2f_{(3)}$. In order to achieve the minimal rate $f_{(3)}$, the SBR2 algorithm regains the factor of two in the sampling rate at the expense of increased complexity \cite{Mishali_multicoset}. In a nutshell, SBR2 is a recursive algorithm that alternates between the CTF described above and a bi-section process. The bi-section splits the original frequency interval into two equal width intervals on which the CTF is applied, until the level of sparsity of $\mathbf{U}$ is less or equal to $K_f$. As opposed to SBR4 which can be performed both in the time and in the frequency domains, SBR2 can obviously be performed only in the frequency domain. We refer the reader to \cite{Mishali_multicoset} for more details.

Once the support $S$ is known, perfect reconstruction of the spectrum can be obtained as follows
\begin{eqnarray} \label{eq:recs}
\mathbf{\hat{r}_x}^S(f) &=& \mathbf{\Phi}_S^{\dagger} \mathbf{r_z}(f) \\
\mathbf{\hat{r}}_{\mathbf{x}_i}(f) &=& 0 \quad \forall i \notin S. \nonumber
\end{eqnarray}

\section{Simulation Results}
\label{sec:simulations}

We now demonstrate power spectrum reconstruction from sub-Nyquist samples for the first and third scenarii, respectively. We also investigate the impact of several simulation parameters on the receiver operating characteristic (ROC) of our detector: signal-to-noise ratio (SNR), sensing time, number of averages (for estimating the autocorrelation matrix $\mathbf{R_z}$ as explained below) and sampling rate. Last, we compare the performance of our detector to one based on spectrum reconstruction from sub-Nyquist samples and a second one based on power spectrum reconstruction from Nyquist samples. Throughout the simulations we consider the analog model and use the MWC analog front-end for the sampling stage.

\subsection{Reconstruction in time and frequency domains and detection}
We first explain how we estimate the elements of $\bf r_z$. The overall sensing time is divided into $P$ frames of length $K$ samples. In Section \ref{subsec:param}, we examine different choices of $P$ and $K$ for a fixed sensing time.
In the digital model, the estimate of $\bf r_z$ is simply obtained by averaging the autocorrelation between the samples $\bf z$ over $P$ frames as follows
\begin{equation}
\mathbf{\hat{r}_z}=\frac{1}{P} \sum_{p=1}^P \mathbf{z}^p (\mathbf{z}^p)^H
\end{equation}
where $\mathbf{z}^p$ is the vector of sub-Nyquist samples of the $p$th frame.

In the analog model, in order to estimate the autocorrelation matrix $\mathbf{R_z}(f)$ in the frequency domain, we first compute the estimates of $\mathbf{z}_i(f), 1 \le i \le M$, $\hat{\mathbf{z}}_i(f)$, using FFT on the samples $z_i[n]$ over a finite time window. We then estimate the elements of $\mathbf{R_z}(f)$ as
\begin{equation}
\mathbf{\hat{R}_z}(i,j,f)=\frac{1}{P} \sum_{p=1}^{P} \hat{\mathbf{z}}^p(i,f) \hat{\mathbf{z}}^p(j,f), \quad f \in \mathcal{F}_s,
\end{equation}
where $P$ is the number of frames for the averaging of the spectrum and $\hat{\mathbf{z}}^p(i,f)$ is the value of the FFT of the samples $\mathbf{z}_i[n]$ from the $p$th frame, at frequency $f$.
In order to estimate the autocorrelation matrix $\mathbf{R_z}[n]$ in the time domain, we convolve the samples $z_i[n]$ over a finite time window as
\begin{equation}
\mathbf{\hat{R}_z}[i,j,n]=\frac{1}{P} \sum_{p=1}^{P} z_i^p[n] * z_j^p[n], \quad n \in \mathcal [0, T/T_{\text{Nyq}}].
\end{equation}
We then use (\ref{eq:rec11}) or (\ref{eq:recs}) in order to reconstruct $\mathbf{\hat{r}_x}(f)$, or their time-domain equivalents to reconstruct $\mathbf{\hat{r}_x}[n]$.

We note that the number of samples dictates the number of DFT cofficients in the frequency domain and therefore the resolution of the reconstructed spectrum in the frequency domain.

For the analog model, we use the following test statistic
\begin{equation}
\mathcal{T}_i = \sum |\hat{\mathbf{r}}_{\mathbf{x}_i}|, \qquad 1 \le i \le N,
\end{equation}
where the sum is performed over frequency or over time, depending on which domain we chose to reconstruct $\bf \hat{r}_x$. Obviously, other detection statistics can be used on the reconstructed power spectrum.

\subsection{Spectrum reconstruction}

We first consider spectrum reconstruction of a non sparse signal. Let $x(t)$ be white Gaussian noise with variance $100$, and Nyquist rate $f_{\text{Nyq}}=10GHz$ with two stop bands. We consider $N=65$ spectral bands and $M=33$ analog channels, each with sampling rate $f_s=154MHz$ and with $N_s=131$ samples each. The overall sampling rate is therefore equal to $50.77 \%$ of the Nyquist rate. Figure \ref{fig:sim1} shows the original and the reconstructed spectrum at half the Nyquist rate (both with averaging over $P=1000$).

\begin{figure}[!h]
\begin{center}
\includegraphics[width=0.5\textwidth]{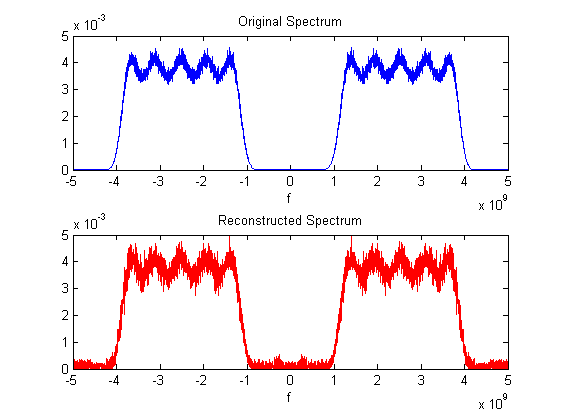}
\caption{Original and reconstructed spectrum of a non sparse signal at half the Nyquist rate.}
\label{fig:sim1}
\end{center}
\end{figure}

We now consider the blind reconstruction of the power spectrum of a sparse signal. Let the number of potentially active transmissions $N_{\text{sig}}=3$. Each transmission is generated from filtering white Gaussian noise with a low pass filter whose two-sided bandwidth is $B=80MHz$, and modulating it with a carrier frequency drawn uniformly at random between $-f_{\text{Nyq}}/2=-5GHz$ and $f_{\text{Nyq}}/2=5GHz$. We consider $N=65$ spectral bands and $M=7$ analog channels, each with sampling rate $f_s=154MHz$ and with $K=171$ samples per channel and per frame. The overall sampling rate is equal to $10.8 \%$ of the Nyquist rate, and $1.9$ times the Landau rate. We consider additive white Gaussian noise. Figures \ref{fig:sim21}-\ref{fig:sim26} shows the original and the reconstructed power spectrum for different values of the number of frames $P$ and of the SNR.

\begin{figure}[!h]
\begin{center}
\includegraphics[width=0.5\textwidth]{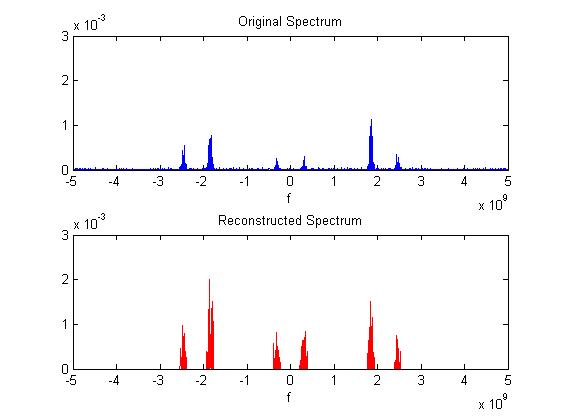}
\caption{Original and reconstructed spectrum: $P=1$ and SNR$=0$dB.}
\label{fig:sim21}
\end{center}
\end{figure}

\begin{figure}[!h]
\begin{center}
\includegraphics[width=0.5\textwidth]{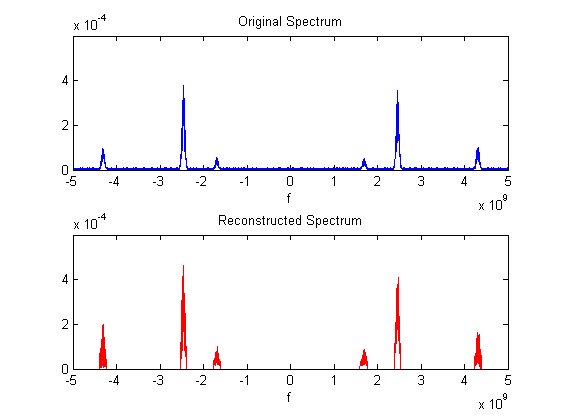}
\caption{Original and reconstructed spectrum: $P=25$ and SNR$=0$dB.}
\label{fig:sim22}
\end{center}
\end{figure}

\begin{figure}[!h]
\begin{center}
\includegraphics[width=0.5\textwidth]{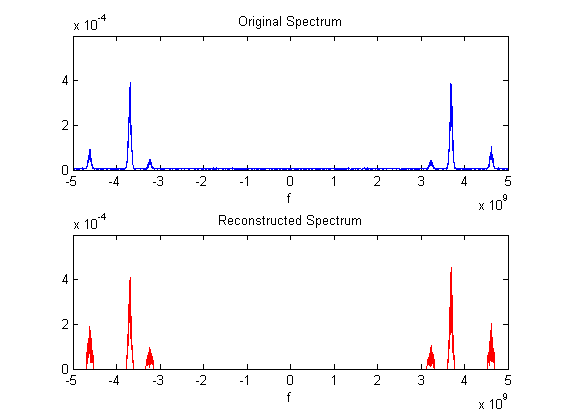}
\caption{Original and reconstructed spectrum: $P=50$ and SNR$=0$dB.}
\label{fig:sim23}
\end{center}
\end{figure}

\begin{figure}[!h]
\begin{center}
\includegraphics[width=0.5\textwidth]{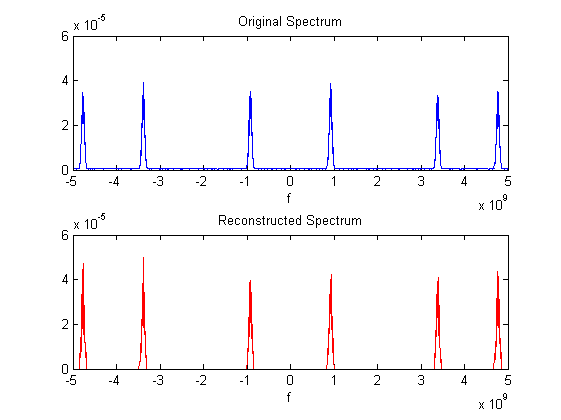}
\caption{Original and reconstructed spectrum: $P=100$ and SNR$=2$dB.}
\label{fig:sim24}
\end{center}
\end{figure}

\begin{figure}[!h]
\begin{center}
\includegraphics[width=0.5\textwidth]{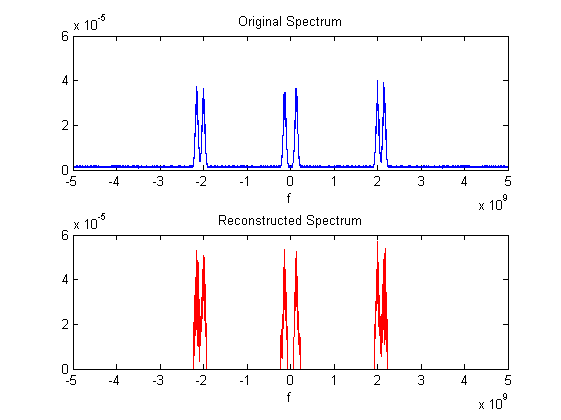}
\caption{Original and reconstructed spectrum: $P=100$ and SNR$=0$dB.}
\label{fig:sim25}
\end{center}
\end{figure}

\begin{figure}[!h]
\begin{center}
\includegraphics[width=0.5\textwidth]{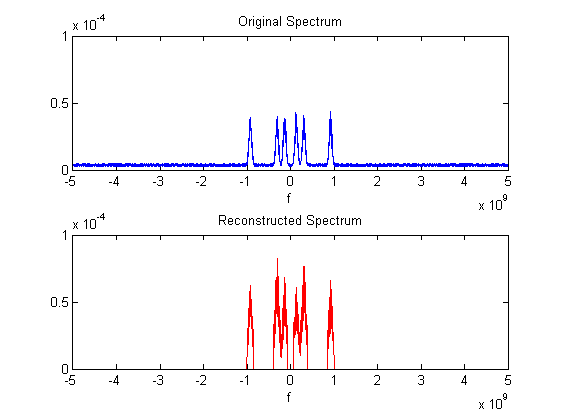}
\caption{Original and reconstructed spectrum: $P=100$ and SNR$=-2$dB.}
\label{fig:sim26}
\end{center}
\end{figure}

%We note that the difference between the original and the reconstructed spectrum comes from the fact that the matrix $\mathbf{R_x}(f)$ is not perfectly diagonal.

%We repeat the previous simulations in the presence of additional white Gaussian noise. Figure \ref{fig:sim3} shows the original and the reconstructed spectrum with $\text{SNR}=10$dB.
%
%\begin{figure}[!h]
%\begin{center}
%\includegraphics[width=0.5\textwidth]{rec3}
%\caption{Original and reconstructed spectrum of a non sparse signal in the presence of noise.}
%\label{fig:sim3}
%\end{center}
%\end{figure}

\subsection{Practical parameters}
\label{subsec:param}
In this section, we consider the influence of several practical parameters on the performance of our detector. The experiments are set up as follows. We consider two scenarios where the actual number of transmissions is 2 and 3, namely $\mathcal{H}_{0}$ and $\mathcal{H}_{1}$ respectively. The number of potentially active transmissions $N_{\text{sig}}$ is set to be 6. Each transmission is similar to those described in the previous experiment. We consider $N=115$ spectral bands and $M$ analog channels, each with sampling rate $f_s=87MHz$. The number of samples per channel and per frame is $K$ and the averaging is performed over $P$ frames. Each experiment is repeated over $500$ realisations.

In the first experiment, we illustrate the impact of SNR on the detection performance. We consider $M=8$  channels. The overall sampling rate is thus $695MHz$, which is a little below $7\%$ of the Nyquist rate and a little above $1.2$ times the Landau rate. Here, $K=171$ and $P=10$ frames. Figure \ref{fig:sim3} shows the ROC of the detector for different values of SNR. 
\begin{figure}[!h]
\begin{center}
\includegraphics[width=0.5\textwidth]{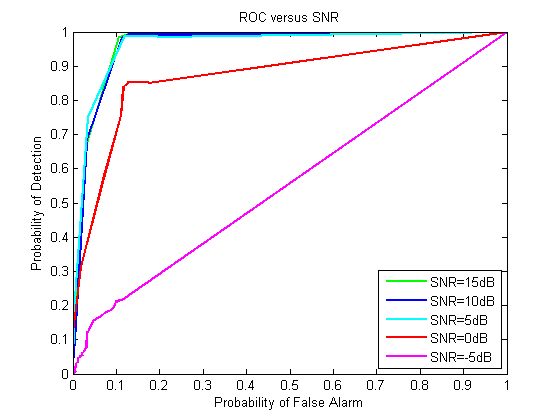}
\caption{Influence of the SNR on the ROC.}
\label{fig:sim3}
\end{center}
\end{figure}
We observe that up to a certain value of the SNR (between $5$dB and $0$dB in this setting), the detection performance does not decrease much. Below that, the performance decreases rapidly. Another observation that can be made concerns the particular form of the ROC curves. These can be split into two parts. The first part corresponds to a regular ROC curve, where the probability of detection increases faster than linearly with the probability of false alarm. After a certain point, the increase becomes linear. This corresponds to the realisations where the support recovery failed and the energy measured in the band of interest is zero both for $\mathcal{H}_{0}$ and $\mathcal{H}_{1}$. The more such realisations there are, the lower the point where the curve's nature changes. As one can expect, this point is lower for lower SNRs.

In the second experiment, we vary the sensing time per frame and keep the number of frames $P=10$ constant. We consider the same sampling parameters as in the previous experiment and set the SNR to be $2$dB. Figure \ref{fig:sim4} shows the ROC of the detector for different values of the number of samples per frame.
\begin{figure}[!h]
\begin{center}
\includegraphics[width=0.5\textwidth]{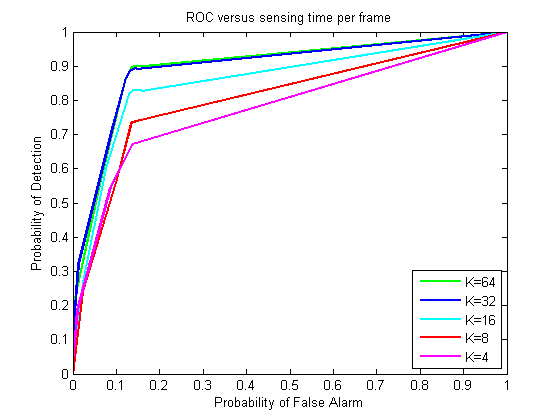}
\caption{Influence of the number of samples per frame on the ROC.}
\label{fig:sim4}
\end{center}
\end{figure}

In the third experiment, we vary the number of frames and keep the number of samples per frame $K=20$ constant. We consider the same sampling parameters as above and set the SNR to be $0$dB. Figure \ref{fig:sim5} shows the ROC of the detector for different values of the number of frames.
\begin{figure}[!h]
\begin{center}
\includegraphics[width=0.5\textwidth]{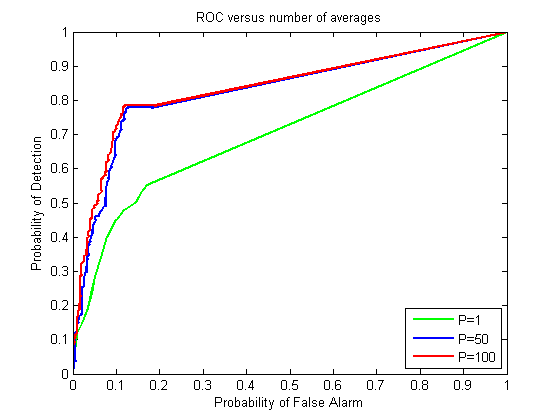}
\caption{Influence of the number of frames on the ROC.}
\label{fig:sim5}
\end{center}
\end{figure}
We observe that above a certain threshold, increasing the number of averages $P$ almost does not affect the detection performance.

An interesting question is, given a limited overall sensing time, or equivalently a limited number of samples, how should one set the number of frames $P$ and the number of samples per frame $K$. In the next experiment, we investigate different choices of $P$ and $K$ for a fixed total number of samples per channel $PK=100$. The rest of the parameters remain unchanged.  Figure \ref{fig:sim6} shows the ROC of the detector for those different settings.
\begin{figure}[!h]
\begin{center}
\includegraphics[width=0.5\textwidth]{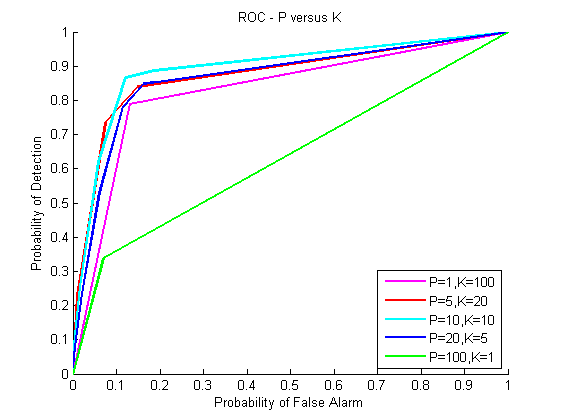}
\caption{Trade-off between the number of frames and the number of samples per frame.}
\label{fig:sim6}
\end{center}
\end{figure}
We can see that in this case, the best performance is attained for a balanced division of the number of samples, namely $P=10$ frames with $K=10$ samples each.

Last, we show the impact of the number of channels, namely the overall sampling rate, on the performance of our detector. The sampling parameters are set as above and the SNR is $0$dB. Figure \ref{fig:sim6} shows the ROC of the detector for different values of the number of channels.
\begin{figure}[!h]
\begin{center}
\includegraphics[width=0.5\textwidth]{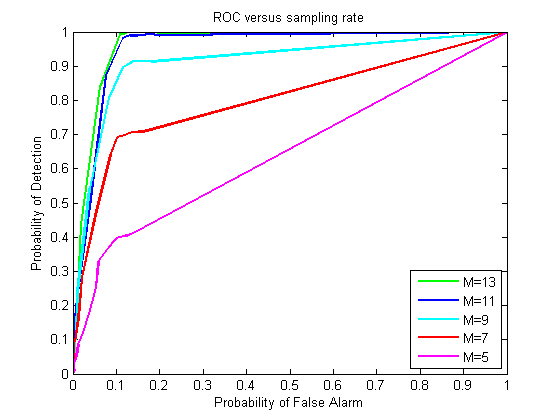}
\caption{Influence of the sampling rate on the ROC.}
\label{fig:sim7}
\end{center}
\end{figure}
The minimal number of channels in this case is $7$. Due to the noise presence, we need to sample above that threshold to obtain good detection performance. We observe that above $9$ channels, the performance increases very little with the number of channels whereas below the threshold of $7$ it decreases drastically.

\subsection{Performance Comparaisons}
We now compare our approach to sub-Nyquist spectrum sensing and nyquist power spectrum sensing.
\subsubsection{Power Spectrum versus Spectrum Reconstruction}

\label{comp_mishali}
First, we consider the approach of \cite{Mishali_theory} where the signal itself is reconstructed from sub-Nyquist samples. We compute the energy of the frequency band of interest and compare this spectrum based detection to our power spectrum based detection. We consider the exact same signal as in the previous section. The sampling parameters are as follows: $N=115$ spectral bands and $M=12$ analog channels, each with sampling rate $f_s=87MHz$. We recall that the minimal sampling rate for signal recovery is twice that needed for power spectrum recovery. The overall sampling rate is therefore $1.04GHz$, namely a little above $10\%$ of the Nyquist rate and almost $1.9$ times the Landau rate. The number of samples per channel and per frame is $K=10$ and the averaging is performed over $P=10$ frames. In the signal reconstruction approach, no averaging needs to be performed. Therefore, we use a total of $PK=100$ samples. Each experiment is repeated over $500$ realisations. Figure \ref{fig:sim31} shows the ROC of both detectors for different values of SNR. 
\begin{figure}[!h]
\begin{center}
\includegraphics[width=0.5\textwidth]{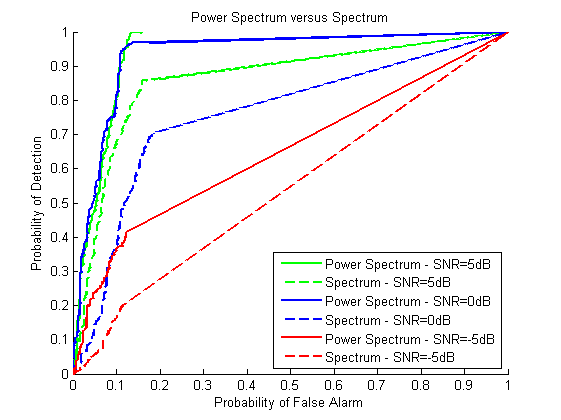}
\caption{Power spectrum versus spectrum reconstruction.}
\label{fig:sim31}
\end{center}
\end{figure}
We oberve that power spectrum sensing outperforms spectrum sensing.

\subsubsection{Nyquist versus Sub-Nyquist Sampling}
We now compare our approach to power spectrum sensing from Nyquist rate samples. We consider the exact same signal and sampling parameters as in Section \ref{comp_mishali} except for the number of channels which is set to $M=9$, leading to an overall sampling rate of $783GHz$, namely a little above $7.8\%$ of the Nyquist rate and $1.4$ times the Landau rate. Figure \ref{fig:sim32} shows the ROC of both detectors for different values of SNR. 
\begin{figure}[!h]
\begin{center}
\includegraphics[width=0.5\textwidth]{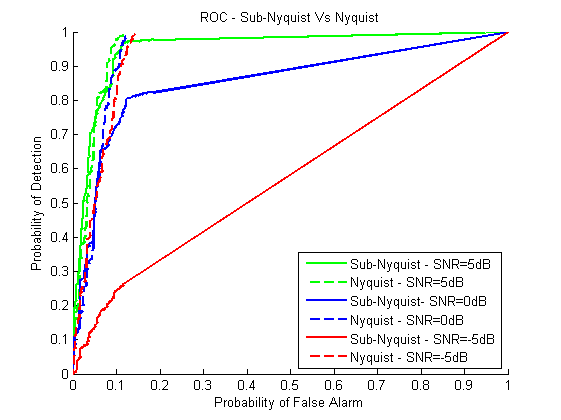}
\caption{Sub-Nyquist versus Nyquist sampling.}
\label{fig:sim32}
\end{center}
\end{figure}
It can be seen that our detector performs similarly as the Nyquist rate one up to a certain SNR threshold (around $5$dB in this setting). Below that threshold, the performance of our sub-Nyquist receiver decreases faster with SNR decrease whereas the Nyquist rate performance detection remains almost unchanged. The comes from the fact that the sensitivity of energy detection is amplified when performed on sub-Nyquist samples due to noise aliasing \cite{Castro}.

\section{Conclusion}
In this paper, we considered power spectrum reconstruction of stationary signals from sub-Nyquist samples. We investigated two signal models: the multiband model referred to as the analog model and the multi-tone model converted into a digital model. For the analog setting, two sampling schemes were adopted and for the digital one, two power spectrum reconstruction schemes were considered.  We showed that all variations of  both the analog and the digital models can be treated and analyzed in a uniform way in the frequency domain whereas a time domain analysis is a lot more complex. 

We derived the minimal sampling rate for perfect power spectrum reconstruction in noiseless settings for the cases of sparse and non sparse signals as well as blind and non blind detection. We also presented recovery techniques for each one of those scenarii. Simulations show power spectrum reconstruction at sub-Nyquist rates as well as the influence of practical parameters such as noise, sensing time and sampling rate on the ROC of the detector. We also showed that sub-Nyquist power spectrum sensing outperforms sub-Nyquist spectrum sensing and that our detector performance is comparable to that of a Nyquist rate power spectrum based detector up to a certain SNR threshold.

\section*{Acknowledgement}
The authors would like to thank Prof. Geert Leus for his valuable input and useful comments.

This work is supported in part by the Israel Science Foundation under Grant no. 170/10, in part by the Ollendorf Foundation, in part by the SRC, and in part by the Intel Collaborative Research Institute for Computational Intelligence (ICRI-CI).

\bibliographystyle{IEEEtran}
\bibliography{IEEEabrv,CompSens_ref}

\end{document}